\newcommand{\R}{\mathbb{R}}
\numberwithin{equation}{section}
\newtheorem{lemma}{LEMMA}[section]
\def\bi{\begin{itemize}}
\def\ei{\end{itemize}}
\begin{document}

\title{A shorter proof of Lemma A.6}
\author{Tom Fischer\thanks{Institute of Mathematics, University of Wuerzburg, 
Campus Hubland Nord, Emil-Fischer-Strasse 30, 97074 Wuerzburg, Germany.
Tel.: +49 931 3188911.
E-mail: {\tt tom.fischer@uni-wuerzburg.de}.
}\\
University of Wuerzburg}
\date{\today}

\maketitle

\begin{abstract}
For the convenience of readers of the article {\em No-arbitrage pricing under systemic risk: accounting for cross-ownership} (Fischer, 2012), a full proof of Lemma A.5 and a shorter proof of Lemma A.6 of that paper
are provided. 
\end{abstract}


\section{Lemma A.5}

Lemma A.5 provides a method of splitting a positive number (or an interval) into
a pre-specified number of ordered summands (or subintervals) where we demand that, 
while the order increases, the summands (or subintervals) have pre-specified positive sizes 
(or pre-specified lengths) for as long as possible. 
While this formulation sounds trivial (the `algorithm' for solving this problem certainly is trivial), 
the resulting formula \eqref{capcons} is possibly not
directly obvious at first sight, especially if the previous formulation is not given.

\paragraph*{Lemma A.5.} 
{\em
For $x\in\R$, $m\in\{1,2,\ldots\}$, and $y^1,\ldots,y^m\in\R_0^+$,
\begin{eqnarray}
\label{capcons}
x & = & \min\left\{y^1,x\right\}
\; + \; \sum_{j=1}^{m-1}\;\; \min\left\{y^{j+1},\left(x-\sum_{i=1}^{j}y^i\right)^+\right\} 
\; + \; \left(x-\sum_{i=1}^{m}y^i\right)^+ .
\end{eqnarray}
}

\begin{proof}
The case $x \leq y^1$ is clear. The case $x \geq \sum_{i=1}^{m}y^i$ is clear, because then 
\begin{equation}
x - \sum_{i=1}^{j}y^i  \; \geq  \; \sum_{i=j+1}^{m}y^i  \; \geq  \; y^{j+1} \quad (j\in\{0,\ldots,m-1\}) .
\end{equation}
Excluding the two cases above, one must have $m \geq 2$ and
\begin{equation}
\label{1.3}
y^1 \; < \; x \; < \; \sum_{i=1}^{m}y^i .
\end{equation}
By the assumptions, there exists now an $i_0\in\{1,\ldots,m-1\}$ with
\begin{equation}
\label{1}
 \sum_{i=1}^{i_0}y^i  \; <  \;  x  \; \leq  \;  \sum_{i=1}^{i_0+1}y^i  ,
\end{equation}
which is equivalent to
\begin{equation}
\label{1.5}
0  \; <  \;  x - \sum_{i=1}^{i_0}y^i \; \leq  \;  y^{i_0+1} .
\end{equation}
From the left inequality in \eqref{1} one obtains
\begin{equation}
\label{1.6}
x - \sum_{i=1}^{j}y^i  \; >  \; \sum_{i=j+1}^{i_0}y^i  \; \geq  \; y^{j+1} \quad (j\in\{0,\ldots,i_0-1\}) .
\end{equation}
From the right inequality of \eqref{1}, one obtains
\begin{equation}
\label{1.7}
x - \sum_{i=1}^{i_0+1}y^i  \; \leq  \;  0 .
\end{equation}
Applying \eqref{1.3}, \eqref{1.6}, \eqref{1.5} and \eqref{1.7} to the right hand side of \eqref{capcons}, 
we obtain
\begin{equation}
y^1 + \ldots + y^{i_0} + \left(x - \sum_{i=1}^{i_0}y^i\right) \; = \; x .
\end{equation}
\end{proof}


\section{Lemma A.6}

The next lemma is a result which gives a condition under which the difference of two numbers
which have been split into the same amount of summands (of which some can be zero) according to Lemma 5
can be expressed as the sum of the absolute values of the differences of their summands.

\begin{lemma}
\label{lemma_2}
Assume $x_1,x_2\in\R$ where 
\begin{equation}
\label{y}
x_1 \; \geq \; x_2 
\end{equation}
and $y^i_1,y^i_2\in\R_0^+$ ($i=1,\ldots,m$) where
\begin{equation}
\label{z1}
y^i_1 \; \geq \; y^i_2 \quad (i=1,\ldots,m)
\end{equation}
and
\begin{equation}
\label{z2}
x_1-x_2 \; \geq \; \sum_{i=1}^{m} \left(y^i_1 - y^i_2\right) .
\end{equation}
Then, the following equation holds:
\begin{eqnarray}
\label{stuffnn}
x_1-x_2 & = & 
\left|\min\left\{y^1_1,x_1\right\} - \min\left\{y^1_2,x_2\right\}\right|\\
\nonumber &  & + \; 
\sum_{j=1}^{m-1}\;\; \left|\min\left\{y^{j+1}_1,\left(x_1-\sum_{i=1}^{j}y^i_1\right)^+\right\}\right. \\
\nonumber &  & \qquad\qquad 
- \; \left.\min\left\{y^{j+1}_2,\left(x_2-\sum_{i=1}^{j} y^i_2\right)^+\right\}\right| \\
\nonumber &  & + \left|\left(x_1-\sum_{i=1}^{m} y^i_1\right)^+ 
- \left(x_2-\sum_{i=1}^{m} y^i_2\right)^+\right| .
\end{eqnarray}
\end{lemma}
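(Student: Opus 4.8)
The plan is to deduce the identity from Lemma A.5 (applied once to $x_1$ and once to $x_2$) together with a term-by-term monotonicity comparison; the upshot will be that all the absolute values on the right-hand side of \eqref{stuffnn} are superfluous. For $k\in\{1,2\}$ and $j\in\{0,1,\ldots,m\}$ I would set
\begin{equation*}
R^j_k \; := \; x_k - \sum_{i=1}^{j} y^i_k ,
\end{equation*}
so that $R^0_k = x_k$. The first point to establish is that $R^j_1 \geq R^j_2$ for every $j$. Indeed $R^j_1 - R^j_2 = (x_1-x_2) - \sum_{i=1}^{j}(y^i_1-y^i_2)$ is non-increasing in $j$ by \eqref{z1}, while it is $\geq R^m_1 - R^m_2 \geq 0$ by \eqref{z2}; since $t\mapsto t^+$ is non-decreasing, this also gives $(R^j_1)^+ \geq (R^j_2)^+$.

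Next I would use the elementary fact that $\min$ is non-decreasing in each of its arguments: if $a_1\geq a_2$ and $b_1\geq b_2$ then $\min\{a_1,b_1\}\geq\min\{a_2,b_2\}$. Applying this to the first summand (via \eqref{y} and \eqref{z1}), to the $j$-th summand for $j=1,\ldots,m-1$ (via $(R^j_1)^+\geq (R^j_2)^+$ and \eqref{z1}), and using $R^m_1\geq R^m_2$ for the last summand, one concludes that each term of the Lemma A.5 decomposition of $x_1$ dominates the corresponding term of the decomposition of $x_2$. Hence every quantity appearing inside an absolute value on the right-hand side of \eqref{stuffnn} is non-negative, so all absolute-value signs may be dropped without changing the value.

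Finally, I would invoke Lemma A.5 for $x_1$ with $y^1_1,\ldots,y^m_1$ and for $x_2$ with $y^1_2,\ldots,y^m_2$, and subtract the two identities: the right-hand side of \eqref{stuffnn}, now stripped of its absolute values, is exactly this difference, and therefore equals $x_1-x_2$. The only step that requires genuine thought is the opening one — recognizing that the remainders $R^j_1-R^j_2$ form a non-increasing sequence that is pinned from below by $R^m_1-R^m_2\geq 0$, which is precisely where hypothesis \eqref{z2} enters. Once that is in place, the rest is just monotonicity of $\min$ and of the positive-part map, combined with the telescoping already contained in Lemma A.5.
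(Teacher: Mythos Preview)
Your argument is correct and is essentially the same as the paper's own proof: you establish $x_1-\sum_{i=1}^{j}y^i_1 \geq x_2-\sum_{i=1}^{j}y^i_2$ for all $j$ (the paper's \eqref{yz}), deduce that every difference inside the absolute values in \eqref{stuffnn} is non-negative, and then subtract the two instances of Lemma~A.5. The only cosmetic differences are your introduction of the notation $R^j_k$ and the explicit mention of the monotonicity of $\min$ and of $t\mapsto t^+$, which the paper leaves implicit.
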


\begin{proof}
\eqref{z2} together with \eqref{z1} implies
\begin{equation}
x_1-x_2 \; \geq \; \sum_{i=1}^{j} \left(y^i_1 - y^i_2\right) \quad (j=1,\ldots,m) ,
\end{equation}
and therefore 
\begin{equation}
\label{yz}
x_1 - \sum_{i=1}^{j} y^i_1  \; \geq \; x_2 - \sum_{i=1}^{j} y^i_2 \quad (j=1,\ldots,m) .
\end{equation}
\eqref{y}, \eqref{z1} and \eqref{yz} imply that all differences on the right hand side of
\eqref{stuffnn} are non-negative. We can therefore apply \eqref{capcons}
(with $x \,\widehat{=}\, x_{1/2}$ and $y^i \,\widehat{=}\, y^i_{1/2}$) and obtain the result.
\end{proof}

\noindent
Obviously, $y^i_1 = y^j_1 = y^i_2 = y^j_2 \geq 0$ for all $i,j\in\{1,\ldots,m\}$ would satisfy the conditions
of the lemma.\\

Lemma A.6 of Fischer (2012) follows now by noting that, without loss
of generality, $y^1 \geq y^2$ can be assumed in the proof of that lemma. Therefore,
Lemma \ref{lemma_2} (with $x_j \,\widehat{=}\,x+y^j$ and $y^i_j \,\widehat{=}\, \psi^i(y^j)$ 
for $i=1,\ldots,m$ and $j=1,2$) can be applied.



\end{document}